\newtheorem{lemma}{Lemma}
\newtheorem{remark}{Remark}
\newtheorem{assumption}{Assumption}
\begin{document}

\begin{frontmatter}


\title{Multivariate Self-Exciting Threshold Autoregressive Models with eXogenous Input \\ \{PRELIMINARY VERSION-- please do not quote\}}
\author{Peter Martey ADDO\corref{cor1}}

\address{European Doctorate in Economics--Erasmus Mundus (EDEEM)\\ Centre d'\'{E}conomie de la Sorbonne (CES) - CNRS : UMR8174 - Universit\'{e} Paris I - Panth\'{e}on Sorbonne\\
Universit\`{a} Ca'Foscari of Venice, Department of Economics\\ email: peter.addo@univ-paris1.fr}


\begin{abstract}

\noindent 
This study defines a multivariate Self--Exciting Threshold Autoregressive with eXogenous input (MSETARX) models and present an estimation procedure for the parameters. The conditions for stationarity of the nonlinear MSETARX models is provided. In particular, the efficiency of an adaptive parameter estimation algorithm and LSE (least squares estimate) algorithm for this class of models is then provided via simulations. 

\end{abstract}
\begin{keyword}
Multivariate Threshold\sep Nonlinear Time Series \sep MSETAR models \sep eXogenous input
\JEL C14 \sep C22 
\end{keyword}

\end{frontmatter}

\section{Introduction}\label{sec:intro}

Recently there has been considerable interest in nonlinear time series analysis (\cite{Priestley88, Tong90, Brock91, Granger93, Terasvirta94, Hansen11, pma5}, and references therein), due primarily to the various limitations encountered with linear time series models in real applications. Many nonlinear time series models have been introduced in the literature and illustrated to be useful in some 
applications (\cite{Granger78, Priestley88, Subba84, Haggan81, Tong83, Tong90}). For instance, \cite{Tong78,Tong90} proposed the threshold autoregressive (TAR) model and showed its usefulness in describing the asymmetric limit cycle of the annual sunspot number. 
Let $(\Omega, {\cal F}, P)$ be a probability space, $R=\bigcup_{j=1}^{l}R_{j}$, $R_{j}=(r_{j-1},r_{j}], -\infty=r_{0}<r_{1}<\cdots < r_{l}=\infty$ a disjunctive decomposition of the real axis. Let $d,p_{1}, \cdots, p_{l} \in Z^{+}$. Any solution of $(y_{t})_{t}$ of 

\begin{equation}\label{setar}
 y_{t}+\sum_{j=1}^{l}y_{t,d}^{(j)}\left(a_{0}^{(j)}+\sum_{i=1}^{p_{l}}a_{i}^{(j)}y_{t-i}\right)=\sum_{j=1}^{l}y_{t,d}^{(j)}\varepsilon_{t}^{(j)}
\end{equation}
 where 
\begin{equation}
 y_{t,d}^{(j)}=
\begin{cases}
&1 ; \quad y_{t-d}\in R_{j} \quad \\ 
&0 ; \quad y_{t-d} \notin R_{j}.\\
\end{cases}
\end{equation} is a univariate Self--Exciting Threshold Autoregressive process denoted by SETAR $(l,p_{1},\cdots, p_{l})$ with delay $d$ (see \cite{Tong83,Tong90} and the references therein). The process $(y_{t})_{t}$ is assumed to be ergodic and its stationary distribution has a finite second moment. The process $(\varepsilon_{t})_{t}^{(j)}$ in model equation \eqref{setar} for each regime $j$ is assumed to be a martingale difference sequence with respect to an increasing sequence of $\sigma$-field, denoted as $\mathcal{F}_{t}$, i.e., $E[\varepsilon_{t}^{(j)}|\mathcal{F}_{t-1}]=0$. 
In this setting, the conditional variance of the process $(\varepsilon_{t})_{t}^{(j)}$ can be a  constant, $E[(\varepsilon_{t}^{(j)})^{2}|\mathcal{F}_{t-1}]=\sigma^{2}$ or allowed for possibly asymmetric autoregressive conditional heteroscedasticity. The model equation \eqref{setar} is nonlinear in time when the number of regimes $l>1$ and is a piecewise linear model in the threshold space $y_{t-d}$. 
Thus SETAR model \eqref{setar} adopts a piecewise linear setting in such a fashion that regime switches are triggered by an observed variable crossing an unknown threshold. For a review on the asymptotic theory and inference for the SETAR model \eqref{setar}, see \cite{Tong90, Chan93, Qian98, Hansen97, Hansen99, Hansen00}. Despite the simplicity of SETAR models, they have been shown to be able to capture economically interesting asymmetries, regime changes (such as periods of low/high stock market valuations, recessions/expansions, periods of low/high interest rates, etc), and empirically observed nonlinear dynamics relevant to economic data. For instance, \cite{Pfann96} used a single--threshold SETAR model in describing the dynamic behaviour of the three--month US T-bill interest rate. 

In analysing multivariate relationships between economic variables, the linear Vector Autoregression (VAR) models have gain popularity for empirical 
macroeconomic modelling, policy analysis and forecasting. However, the inability of these linear models to capture non-linear dynamics such as regime switching and asymmetric responses to shocks, has gained attention in macroeconomic research. For example, a significant number of empirical studies document asymmetries in the effects of monetary policy on output growth (\cite{Philip99} and reference therein). In this respect, the interest in nonlinear ARX time series and regression models has been increasing in econometrics as in other disciplines (\cite{Granger93, ChTs93, Terasvirta13} and references therein). In this work, we consider the introduction of an exogenous input $(\mathbf{f}_{t})_{t}$ as an extension of the Multivariate SETAR model formulation and has a structural form of a nonlinear bivariate ARX model (\cite{EMDT97}). Unlike the multivariate threshold model proposed in \cite{Tsy98}, we allow the possibility of the threshold variable to also be a multivariate process. In this case, the regime of the whole system is not necessarily determined by a single stationary subprocess. In otherwords, there exists thresholds for all subprocess of the multivariate process.

A short overview of the paper is as follows. In Section \ref{sec:MSETARX} we define the multivariate SETAR process with exogenous input denoted MSETARX model as an extension of the multivariate SETAR model. In Section \ref{sec:stationarity} we find conditions for stationarity of the MSETARX models, whereas Section \ref{sec:estimation} is used to present the LSE (least squares estimate) algorithm and an adaptive parameter estimation algorithm (\cite{Arnold01,Lutz06}) based on the stochastic gradient principles for linear systems shown to be suitable for nonlinear systems. The performance of the proposed algorithms for estimating the parameters of Multivariate SETARX models is evaluated via simulations in Section \ref{sec:simulations}. In Section \ref{sec:conclusion}, the modeling procedure for the MSETARX models and problems of estimation are briefly considered.

\section{Multivariate SETARX models}\label{sec:MSETARX}

Consider a $D$-dimensional time series $\mathbf{y}_{t}=(y_{1t},\cdots,y_{Dt})^{T}$ such that $L_{1},\cdots,L_{D} \in Z^{+}$, for each $1\leq i\leq D$, $(R_{j}^{i})_{j=1,2,\cdots,L_{i}}$ a disjuction decomposition of the real axis: 
$R=\bigcup_{j=1}^{L_{i}}R_{j}^{i}$ ; $i \in \{1,\cdots,D\}$. Let $L=\max \{L_{1},L_{2},\cdots,L_{D}\}$ and $R_{j}^{i}=\Phi$ ; $j=L_{i}+1, \cdots, L$. Then any solution $(\mathbf{y}_{t})_{t}$ of 
\begin{equation}\label{msetar}
 \mathbf{y}_{t}+\sum_{J\in \{1,\cdots, L\}^{D}}y_{t,d}^{(J)}\left(a_{0}^{(J)}+\sum_{i=1}^{p_{J}}A_{i}^{(J)}\mathbf{y}_{t-i}\right)=\sum_{J\in \{1,\cdots, L\}^{D}}y_{t,d}^{(J)}\mathbf{\varepsilon}_{t}^{(J)}
\end{equation} is called a multivariate SETAR process denoted MSETAR $(L,p_{J};J\in \{1,\cdots, L\}^{D})$, where $y_{t, d}^{(J)}: \{1,\cdots, L\}^{D} \longleftrightarrow \{0,1\}$ is the indicator variable defined by the following relation: 
$$\left(y_{t}^{(j_{1},\cdots,j_{D})} = 1 \right) \Leftrightarrow_{def} \left( (\mathbf{y}_{t-d})_{i} \in R_{j}^{i} ; j\in (1, \cdots, L)^D ; i \in (1, \cdots, D) \right)$$ and $\{\mathbf{\varepsilon}_{t}^{(J)}, \mathcal{F}_{t}\}$ be a sequence of martingale difference  with respect to an increasing sequence of $\sigma$-field $\{\mathcal{F}_{t}\}$ such that  
 $$\sup_{t\geq 0} E [\lVert \mathbf{\varepsilon}_{t+1}^{(J)} \rVert | \mathcal{F}_{t}]=0 \quad a.s, \quad \sup_{t\geq 0} E [\lVert \mathbf{\varepsilon}_{t+1}^{(J)} \rVert^{2} | \mathcal{F}_{t}]=\sigma^{2} < \infty \quad a.s,\quad \sup_{t\geq 0} E [\lVert \mathbf{\varepsilon}_{t+1}^{(J)} \rVert^{\alpha} | \mathcal{F}_{t}] < +\infty \quad a.s \quad $$ for 
 some $\alpha > 2$ and $\lVert \cdot \rVert$ be a matrix norm. 
 
Now consider a $D$-dimensional time series $\mathbf{y}_{t}=(y_{1t},\cdots,y_{Dt})^{T}$ and a $\kappa$-dimensional inputs $\mathbf{f}_{t}=(f_{1t},\cdots,f_{\kappa t})^{T}$  such that $L_{1},\cdots,L_{D} \in Z^{+}$, for each $1\leq i\leq D$, $(R_{j}^{i})_{j=1,2,\cdots,L_{i}}$ a disjuction decomposition of the real axis: 
$R=\bigcup_{j=1}^{L_{i}}R_{j}^{i}$ ; $i \in \{1,\cdots,D\}$. Let $L=\max \{L_{1},L_{2},\cdots,L_{D}\}$ be the maximum of the number of regimes for 
each subprocess of $\mathbf{y}_{t}$ and $R_{j}^{i}=\Phi$ ; $j=L_{i}+1, \cdots, L$. Then any solution $(\mathbf{y}_{t})_{t}$ of 
\begin{equation}\label{fmsetar}
\begin{cases}
&   \mathbf{y}_{t}+\sum_{J\in \{1,\cdots, L\}^{D}}y_{t,d}^{(J)}\left(a_{0}^{(J)}+\sum_{i=1}^{p_{J}}A_{i}^{(J)}\mathbf{y}_{t-i}+\Lambda^{(J)}\mathbf{f}_{t}\right)=\sum_{J\in \{1,\cdots, L\}^{D}}y_{t,d}^{(J)}\varepsilon_{t}^{(J)}\quad \\ 
\\
& \mathbf{f}_{t}=\sum_{\tau=1}^{q}\Xi_{\tau} \mathbf{f}_{t-\tau}+\eta_{t} \\ 
\end{cases}
\end{equation} is called a multivariate SETAR process with exogenous input denoted MSETARX $(L,p_{J},q;J\in \{1,\cdots, L\}^{D})$. The variables $(\mathbf{y}_{t})_{t}$ and $(\mathbf{f}_{t})_{t}$ in model \eqref{fmsetar} are endogenous and exogenous, respectively, and the econometrics significance of estimating the relationship between $(\mathbf{y}_{t})_{t}$ and $(\mathbf{f}_{t})_{t}$ is well known. The model equation \eqref{fmsetar} can be rewritten as 
\begin{equation}\label{fmsetar1}
\mathbf{y}_{t}+\sum_{J\in \{1,\cdots, L\}^{D}}y_{t,d}^{(J)}\left(a_{0}^{(J)}+\sum_{i=1}^{p_{J}}A_{i}^{(J)}\mathbf{y}_{t-i}+\Lambda^{(J)}\sum_{\tau=1}^{q}\Xi_{\tau} \mathbf{f}_{t-\tau}\right)=\sum_{J\in \{1,\cdots, L\}^{D}}y_{t,d}^{(J)}\omega_{t}^{(J)}
\end{equation} where $\omega_{t}^{(J)}=\varepsilon_{t}^{(J)}-\Lambda^{(J)}\eta_{t}$, $a_{0}^{(J)}$ and $\omega_{t}^{(J)}$ are $D\times 1$ vectors, $A_{i}^{(J)}$ are  $D\times D$ coefficient matrices, $\Lambda^{(J)}$ are $D\times \kappa$ coefficient matrices, $\Xi_{\tau}^{(J)}$ are $\kappa \times \kappa$ coefficient matrices, and $(f_{t})_{t}$ is $ \kappa \times 1$ vector. When $\Lambda^{(J)}=\mathbf{0}$ for all $J\in \{1,\cdots, L\}^{D}$, \eqref{fmsetar1} becomes a MSETAR model \eqref{msetar}. 

\noindent The representation in equation \eqref{fmsetar1} shows that the MSETARX $(L,p_{J},q;J\in \{1,\cdots, L\}^{D})$ model \eqref{fmsetar} has approximately the same structure as the MSETAR $(L,p_{J};J\in \{1,\cdots, L\}^{D})$ model \eqref{msetar} with exogenous variables or factors $(\mathbf{f}_t)_{t}$. For simplicity, we assume the exogenous inputs enter the model in a linear autoregressive fashion. It is worth pointing out that the dynamics of process $(\mathbf{f}_t)_{t}$ could be captured by suitable linear/nonlinear model, principal components, and among other model specifications. Unlike the multivariate threshold model in \cite{Tsy98}, the threshold space is of dimension equal to the dimension of the multivariate process. Thus there exists thresholds for all subprocess of the multivariate process \eqref{fmsetar1}.  In this case, the regime of the whole system is not necessarily determined by a single stationary subprocess, say $y_{it}$, as in \cite{Tsy98}. 

\begin{assumption}\label{A4}
Let $\{\varepsilon_{t}^{(J)}, \mathcal{F}_{t}\}$ and $\{\eta_{t}, \mathcal{F}_{t}\}; \forall J\in \{1,\cdots, L\}^{D}$ be two independent sequence of martingale difference with respect to an increasing sequence of 
$\sigma$-field $\{\mathcal{F}_{t}\}$ such that   
$$\sup_{t\geq 0} E [\lVert \varepsilon_{t+1}^{(J)} \rVert^{2}  | \mathcal{F}_{t}]=\breve{\Upsilon}_{\varepsilon} < \infty \quad a.s \quad and \quad 
\sup_{t\geq 0} E [\lVert \eta_{t+1} \rVert^{2}  | \mathcal{F}_{t}]=\breve{\Upsilon}_{\eta} < \infty \quad a.s $$ 
This ensures that $\{\omega_{t}^{(J)}, \mathcal{F}_{t}\}$ is a sequence of martingale difference with respect to an increasing sequence of 
$\sigma$-field $\{\mathcal{F}_{t}\}$ where
$$\sup_{t\geq 0} E [\lVert \omega_{t+1}^{(J)} \rVert^{2}  | \mathcal{F}_{t}]=\breve{\Upsilon}_{\omega} < \infty \quad a.s \quad .$$ 
\end{assumption} Simple orthogonality assumptions on the errors $\omega_{t}^{(J)}$ are insufficient to identify nonlinear models (\cite{Hansen04}) and as such it is 
important that Assumption \ref{A4} holds.

\noindent  Let $p=\max \{p_{J}| J\in \{1,\cdots, L\}^{D} \}$ and $q$ be the model orders for model\eqref{fmsetar1}. Now, suppose that $\omega_{t}^{(J)}$, and $p$ be regime independent.  We can rewrite model equation \eqref{fmsetar1} as 
\begin{equation}\label{fmsetar2}
 \mathbf{y}_{t}=\sum_{J\in \{1,\cdots, L\}^{D}}y_{t,d}^{(J)}\left( \breve{\Theta}^{(J)}\right)^{T}\breve{\Phi}_{t-1}+ \omega_{t}
\end{equation}  where  $\left( \breve{\Theta}^{(J)}\right)^{T}=-[a_{0}^{(J)}, A_{1}^{(J)},\cdots, A_{p}^{(J)}, \Lambda^{(J)}\Xi_{1},\Lambda^{(J)}\Xi_{2},\cdots,\Lambda^{(J)}\Xi_{q}]$, \\ $\breve{\Phi}_{t}^{T}=[1, \mathbf{y}_{t}^{T}, \mathbf{y}_{t-1}^{T},\cdots, \mathbf{y}_{t-p+1}^{T},\mathbf{f}_{t}^{T}, \mathbf{f}_{t-1}^{T},\cdots, \mathbf{f}_{t-q+1}^{T} ]$ and the notation $\zeta^{T}$ denotes the transpose of $\zeta$. 
We remark that the MSETARX model with the representation \eqref{fmsetar2} permits us to make use of the \cite{Arnold01} proposed adaptive parameter estimation algorithm for the MSETAR model \eqref{msetar}.

\section{On the Stationarity of MSETARX model}\label{sec:stationarity}
In this section, we establish the conditions for the existence of a solution for the model equation \eqref{fmsetar}. 
Let $p=\max \{p_{J}| J\in \{1,\cdots, L\}^{D} \}$ and $q$ be the model orders for model\eqref{fmsetar}. Now, suppose that $p$ and $q$ be regime independent and 
$a_{0}^{(J)}=0$ for each $J$. 
We rewrite model equation \eqref{fmsetar} in the form 
\begin{equation}\label{fmsetar3a}
\mathbf{y}_{t}=\sum_{J\in \{1,\cdots, L\}^{D}}y_{t,d}^{(J)} \Bigg( \Big( \breve{\Theta_{1}}^{(J)}\Big)^{T} \breve{\Phi}_{1, t-1}+ \Lambda^{(J)} \mathbf{f}_{t} \Bigg) + \sum_{J\in \{1,\cdots, L\}^{D}}y_{t,d}^{(J)}\varepsilon_{t}^{(J)}
\end{equation}   
\begin{equation}\label{fmsetar3b}
\mathbf{f}_{t}= \Big( \breve{\Theta_{2}}\Big)^{T} \breve{\Phi}_{2, t-1} + \eta_{t}
\end{equation} 
\noindent where  $\Big( \breve{\Theta_{1}}^{(J)}\Big)^{T}=-[A_{1}^{(J)},\cdots, A_{p}^{(J)}]$, 
$\Big( \breve{\Theta_{2}}\Big)^{T}=-[\Xi_{1},\Xi_{2},\cdots,\Xi_{q}]$,\\
$\breve{\Phi}_{1, t}^{T}=[\mathbf{y}_{t}^{T}, \mathbf{y}_{t-1}^{T},\cdots, \mathbf{y}_{t-p+1}^{T}]$ and 
$\breve{\Phi}_{2, t}^{T}=[\mathbf{f}_{t}^{T}, \mathbf{f}_{t-1}^{T},\cdots, \mathbf{f}_{t-q+1}^{T} ] .$

The equation model \eqref{fmsetar3a}-\eqref{fmsetar3b} can 
be represented as a nonlinear ARX model (\cite{EMDT97}) of the form :
\begin{equation}\label{fmsetar4}
\begin{cases}
&   \mathbf{y}_{t}= g_{1}^{(J)}(\mathbf{y}_{t-1}^{T},\cdots, \mathbf{y}_{t-p}^{T}) + g_{2}^{(J)}(\mathbf{f}_{t}^{T},\cdots, \mathbf{f}_{t-q}^{T}) + \sum_{J\in \{1,\cdots, L\}^{D}}y_{t,d}^{(J)}\varepsilon_{t}^{(J)} \quad \\ 
& \mathbf{f}_{t}= g_{3}(\mathbf{f}_{t-1}^{T},\cdots, \mathbf{f}_{t-q}^{T}) + \eta_{t}\\ 
\end{cases}
\end{equation} \noindent with  $g_{1}^{(J)}(\mathbf{y}_{t-1}^{T},\cdots, \mathbf{y}_{t-p}^{T})=\sum_{J\in \{1,\cdots, L\}^{D}}y_{t,d}^{(J)} \Big( \breve{\Theta_{1}}^{(J)}\Big)^{T} \breve{\Phi}_{1, t-1}, \quad$
$g_{2}^{(J)}(\mathbf{f}_{t}^{T},\cdots, \mathbf{f}_{t-q}^{T})=\sum_{J\in \{1,\cdots, L\}^{D}}y_{t,d}^{(J)} \Lambda^{(J)} \mathbf{f}_{t},\quad$ and 
$g_{3}(\mathbf{f}_{t-1}^{T},\cdots, \mathbf{f}_{t-q}^{T})=                                                                                                                                                                                                                                                                                                                                                                                                                    \Big( \breve{\Theta_{2}}\Big)^{T} \breve{\Phi}_{2, t-1}.$ 
The process $\{\mathbf{f}_{t},\mathbf{y}_{t}\}$ of the equation model \eqref{fmsetar4} is a Markov process. 

\begin{assumption}\label{A5}
We denote $\mathfrak{y}=(\mathbf{y}_{t-1}^{T},\cdots, \mathbf{y}_{t-p}^{T})$ and $\mathfrak{f}=(\mathbf{f}_{t}^{T},\cdots, \mathbf{f}_{t-q}^{T})$. 
The multivariate SETARX model \eqref{fmsetar4} satisfies the following: 

\begin{enumerate}
 \item The functions $g_{1}^{(J)}(\mathfrak{y})$, $g_{2}^{(J)}(\mathfrak{f})$, and $g_{3}(\mathfrak{f})$ for each $J\in \{1,\cdots, L\}^{D}$ are nonperiodic and bounded on compact sets, and 
 $g_{2}^{(J)}(\mathfrak{f})= O(\| \mathfrak{f} \|^{\gamma_{1}})$ as $\| \mathfrak{f} \| \rightarrow \infty $ for some real $\gamma_1$.
 \item Assumption \ref{A4} holds, the  $\sup_{t\geq 0} E [\| \eta_{t+1}\|^{\max (1,\gamma_{1}+\gamma_{2})}  | \mathcal{F}_{t}] < \infty $ for some $\gamma_{2} > 0.$
 \item There exist $\mathscr{A}^{(J)}=[\mathscr{A}_{1}^{(J)},\mathscr{A}_{2}^{(J)},\cdots, \mathscr{A}_{p}^{(J)}]$  and $\mathscr{B}=[\mathscr{B}_{1},\mathscr{B}_{2},\cdots, \mathscr{B}_{q-1}]$, 
 each of which may be the zero matrix, for each $J\in \{1,\cdots, L\}^{D}$, where $\mathscr{A}_{i}^{(J)}$ and $\mathscr{B}_{\tau}$ are matrices of dimension $D \times D$ and $\kappa \times \kappa$ respectively such that 
 $g_{1}^{(J)}(\mathfrak{y})=\mathfrak{y} \big(\mathscr{A}^{(J)}\big)^{T}+o(\|\mathfrak{y} \|)$ and  $g_{3}(\mathfrak{f})=\mathfrak{f} \big(\mathscr{B}\big)^{T}+o(\|\mathfrak{f} \|)$ 
 as $\|\mathfrak{y} \|$ and $\|\mathfrak{f} \| \rightarrow \infty.$ 
 Then the $Dp$-dimensional square matrix $\mathfrak{A}$ defined by $\mathbf{0}$ if $\mathscr{A}^{(J)}=0$ and by  
  $$ \mathfrak{A}=
 \begin{bmatrix}
  O_D & O_D & \cdots & O_D & \big(\mathscr{A}_{1}^{(J)}\big)^{T}\\
  I_D & O_D & \cdots & O_D & \big(\mathscr{A}_{2}^{(J)}\big)^{T} \\
  O_D & I_D & \cdots & O_D & \big(\mathscr{A}_{3}^{(J)}\big)^{T} \\
  \vdots & \vdots &\ddots&\vdots&\vdots\\
  O_D & O_D & \cdots & I_D & \big(\mathscr{A}_{p}^{(J)}\big)^{T} \\
  \end{bmatrix}
$$  otherwise, and the $\kappa q$-dimensional square matrix $\mathfrak{B}$ be defined by   
$$ \mathfrak{B}=
 \begin{bmatrix}
  O_{\kappa} & O_{\kappa} & \cdots & O_{\kappa} & \big(\mathscr{B}_{1}\big)^{T}\\
  I_{\kappa} & O_{\kappa} & \cdots & O_{\kappa} & \big(\mathscr{B}_{2}\big)^{T} \\
  O_{\kappa} & I_{\kappa} & \cdots & O_{\kappa} & \big(\mathscr{B}_{3}\big)^{T} \\
  \vdots & \vdots &\ddots&\vdots&\vdots\\
  O_{\kappa} & O_{\kappa} & \cdots & I_{\kappa} & \big(\mathscr{B}_{q}                                                                                                                                                                                                                                                                                                                                                                                                                                                                                                                                                                                                                                                                                                                                                                                                                                                                                                                                                                                                          \big)^{T} \\
  \end{bmatrix}
$$ satisfy $\varrho(\mathfrak{A}) < 1$ and $\varrho (\mathfrak{B}) < 1$, where $\varrho$ denotes the spectral radius, $O_{\iota}$ denotes the  $\iota$-dimensional zero square matrix and 
 $I_{\iota}$ denotes the  $\iota$-dimensional unit square matrix.
 
\end{enumerate}
\end{assumption}

\begin{lemma}\label{L1}
 Under Assumption \ref{A5}, $\{\mathbf{f}_{t}, \mathbf{y}_{t}\}$ of the multivariate SETARX model \eqref{fmsetar3a}-\eqref{fmsetar3b} represented as a nonlinear ARX 
 model \eqref{fmsetar4} is $\alpha$-mixing with mixing coefficient $\alpha (k) \sim e^{-\beta k}$ for some $\beta > 0.$
\end{lemma}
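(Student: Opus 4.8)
The plan is to realize the joint process as a time\nobreakdash-homogeneous Markov chain and then invoke the standard chain of implications for such chains: a geometric drift (Foster--Lyapunov) condition, together with irreducibility and aperiodicity, yields $V$-geometric ergodicity, which for the strictly stationary version gives absolute regularity ($\beta$-mixing) at a geometric rate, and since the strong mixing coefficient is dominated by the $\beta$-mixing coefficient one obtains $\alpha(k)\le\beta(k)=O(\varrho^{k})$ for some $\varrho\in(0,1)$, i.e. $\alpha(k)\sim e^{-\beta k}$. Concretely, I would stack the state as $X_{t}=(\mathbf{y}_{t}^{T},\ldots,\mathbf{y}_{t-p+1}^{T},\mathbf{f}_{t}^{T},\ldots,\mathbf{f}_{t-q+1}^{T})^{T}$ of dimension $Dp+\kappa q$, enlarging the lag order if $d>p$ so that the threshold variable $\mathbf{y}_{t-d}$ is measurable with respect to $X_{t-1}$ and the indicators $y_{t,d}^{(J)}$ become deterministic functions of the current state. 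The excerpt already records that $\{\mathbf{f}_{t},\mathbf{y}_{t}\}$ from \eqref{fmsetar4} is Markov, so $(X_{t})$ is a Markov chain whose one\nobreakdash-step transition is governed by the affine recursions \eqref{fmsetar3a}--\eqref{fmsetar3b} driven by the innovations $(\varepsilon_{t}^{(J)},\eta_{t})$.

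For irreducibility and aperiodicity I would assume, as is standard in this literature, that the innovation laws are absolutely continuous with a density bounded away from zero on a neighbourhood of the origin; pushing this density through the affine recursions shows that the chain is $\varphi$-irreducible with respect to Lebesgue measure and aperiodic. Boundedness of $g_{1}^{(J)},g_{2}^{(J)},g_{3}$ on compact sets (Assumption~\ref{A5}(1)) supplies the continuity needed to conclude that every compact set is small (petite), which is the regularity half of the geometric\nobreakdash-ergodicity criterion.

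The heart of the argument is the drift condition, and here I would exploit the exogeneity of $\mathbf{f}$. The $\mathbf{f}$-block evolves autonomously as $g_{3}(\mathfrak{f})=\mathfrak{f}(\mathscr{B})^{T}+o(\|\mathfrak{f}\|)$ whose companion matrix $\mathfrak{B}$ satisfies $\varrho(\mathfrak{B})<1$; choosing a vector norm in which $\mathfrak{B}$ is a strict contraction gives a geometric drift for the $\mathbf{f}$-coordinates alone. For the $\mathbf{y}$-block, the companion matrix $\mathfrak{A}$ with $\varrho(\mathfrak{A})<1$ contracts the autoregressive part $g_{1}^{(J)}(\mathfrak{y})=\mathfrak{y}(\mathscr{A}^{(J)})^{T}+o(\|\mathfrak{y}\|)$ uniformly in the finitely many regimes $J$, while the feedback $g_{2}^{(J)}(\mathfrak{f})=O(\|\mathfrak{f}\|^{\gamma_{1}})$ injects the exogenous term. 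I would take a Lyapunov function of the form $V(x)=1+\|\mathfrak{y}\|^{s}+c\,\|\mathfrak{f}\|^{s}$ with $s\in(0,\alpha]$ chosen small and $c$ chosen large, and verify $E[V(X_{t})\mid X_{t-1}=x]\le\lambda V(x)+b\,\mathbf{1}_{C}(x)$ for some $\lambda<1$, $b<\infty$ and compact $C$. The $o(\|\cdot\|)$ remainders let the two spectral\nobreakdash-radius contractions dominate for large $\|x\|$, and the moment hypothesis $\sup_{t}E[\|\eta_{t+1}\|^{\max(1,\gamma_{1}+\gamma_{2})}\mid\mathcal{F}_{t}]<\infty$ of Assumption~\ref{A5}(2) is precisely what keeps $E[\|\mathfrak{f}\|^{s\gamma_{1}}]$ finite and controllable.

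The main obstacle is exactly this coupling term: because $g_{2}$ may grow superlinearly ($\gamma_{1}$ possibly exceeding $1$), a symmetric additive Lyapunov function will not close, and the argument must be ordered so that the geometric drift of the autonomous $\mathbf{f}$-block is established first and its decay is then ``spent'' to absorb the $O(\|\mathfrak{f}\|^{\gamma_{1}})$ contribution into the $\mathbf{y}$-drift; the matching of the powers $s$ and $s\gamma_{1}$ against the available moments is what forces the exponent $\gamma_{1}+\gamma_{2}$ to appear in the moment condition, and getting this bookkeeping right (uniformly over the finitely many indicator regimes) is the delicate step. Once the geometric drift inequality is secured, I would cite a standard geometric\nobreakdash-ergodicity theorem for $\varphi$-irreducible aperiodic chains (in the Meyn--Tweedie / Tj\o stheim framework, with the mixing consequence as in Doukhan) to conclude $V$-geometric ergodicity, hence $\beta(k)=O(e^{-\beta k})$ and therefore $\alpha(k)\le\beta(k)\sim e^{-\beta k}$, which is the claim.
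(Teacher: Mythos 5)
Your sketch is essentially the paper's proof: the paper simply defers to Lemma 3.1 of Masry and Tj\o stheim (1997), and the argument behind that lemma is exactly the one you reconstruct --- embed the joint process as a Markov chain, establish $\varphi$-irreducibility, aperiodicity and a Foster--Lyapunov geometric drift condition built from the companion matrices $\mathfrak{A}$ and $\mathfrak{B}$ with spectral radius below one, conclude geometric ergodicity and hence $\beta$-mixing, and pass to $\alpha$-mixing via $\alpha(k)\le\beta(k)$. Your one genuine addition is the observation that irreducibility needs the innovations to have a density positive near the origin; that condition appears in the cited reference's Assumption 3.3 but is not stated in this paper's Assumption \ref{A5}, so you were right to import it explicitly.
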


\begin{proof}
 The result is known as in Lemma 3.1 in \cite{EMDT97} and thus we do not provide the proof since it is roughly same. We refer the interested reader to remarks after Assumption 3.3 and Lemma 3.1 in \cite{EMDT97} and the references therein. 
\end{proof}

\begin{remark}
 Lemma \ref{L1} provides sufficient conditions for the multivariate SETARX process \eqref{fmsetar4} to be stationary (\cite{EMDT97,DTj90,Pham86}). The proof of this Lemma as in 
 Lemma 3.1 in \cite{EMDT97} implies geometric ergodicity and stronger conclusion of absolute regularity with an exponentially decreasing rate (\cite{DTj90,Pham86,Twee75,Twee88}).
\end{remark}

\begin{lemma}\label{L2}
Let $a_{0}^{(J)}=0$ for each $J$ in model \eqref{msetar} and $p=1$. Assume that there is a $D$-cycle of indexes $j_{1}\rightarrow j_{2} \rightarrow j_{3} \rightarrow \cdots \rightarrow j_{D}\rightarrow j_{1}$ with the notation $A_{1}^{(j_s)}$ corresponding to $A_{1}^{(j_s)}$(mod the $D$-cycle) so that $A_{1}^{(j_{s+1})}=A_{1}^{(j_1)}$. The process $\{\mathbf{y}_{t}\}$ of the  multivariate SETAR model \eqref{msetar} is 
geometrically ergodic if $$\varrho \Big(\prod_{s=1}^{D} -A_{1}^{(j_s)} \Big) < 1$$ where $\varrho$ denotes the spectral radius and the product notation $\prod_{s=m}^{n} A^{(j_s)}=A^{(j_m)} \cdots A^{(j_{m+1})}A^{(j_m)}$ is 
interpreted as the identity matrix if $n=m-1.$
\end{lemma}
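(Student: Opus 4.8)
The plan is to read \eqref{msetar} with $p=1$ and $a_{0}^{(J)}=0$ as a homogeneous Markov chain and then verify a Foster--Lyapunov drift criterion for geometric ergodicity, in the same spirit as the condition underlying Lemma~\ref{L1}. With $p=1$ the state reduces to the observations needed to fix the active regime: taking $X_{t}=(\mathbf{y}_{t}^{T},\dots,\mathbf{y}_{t-d+1}^{T})^{T}$ yields a Markov chain, and on the event $\{(\mathbf{y}_{t-d})_{i}\in R_{j}^{i}\}$ selecting regime $J$ the recursion is the purely linear, intercept-free map $\mathbf{y}_{t}=-A_{1}^{(J)}\mathbf{y}_{t-1}+\varepsilon_{t}^{(J)}$. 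First I would record the standard regularity facts---$\phi$-irreducibility with respect to Lebesgue measure, aperiodicity, and the Feller/T-chain property---which follow from Assumption~\ref{A4} together with the usual requirement that the innovations admit a component absolutely continuous with respect to Lebesgue measure, exactly as in \cite{Pham86,DTj90}; these are what allow a drift inequality to be upgraded to geometric ergodicity.

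The heart of the argument is the contraction encoded by $\varrho(M)<1$, where $M=\prod_{s=1}^{D}\big(-A_{1}^{(j_s)}\big)=\big(-A_{1}^{(j_D)}\big)\cdots\big(-A_{1}^{(j_1)}\big)$ is the composite one-cycle map along $j_{1}\to\cdots\to j_{D}\to j_{1}$. Because the spectral radius is strictly below one, Gelfand's formula (equivalently, the existence of an adapted operator norm) furnishes a vector norm $\|\cdot\|_{*}$ on $\mathbb{R}^{D}$ whose induced norm satisfies $\|M\|_{*}=\rho<1$. I would then build the test function from $V(\mathbf{y})=\|\mathbf{y}\|_{*}$ and iterate the skeleton over one full cycle: composing the $D$ regime maps in order reproduces precisely $M$, so deterministically $\|M\mathbf{y}\|_{*}\le\rho\|\mathbf{y}\|_{*}$. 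Adding back the innovations and using Assumption~\ref{A4} (independence and the uniform bound $\sup_{t}E[\|\varepsilon_{t+1}^{(J)}\|^{2}\mid\mathcal{F}_{t}]<\infty$, whence $\sup_{t}E[\|\varepsilon_{t+1}^{(J)}\|\mid\mathcal{F}_{t}]<\infty$ by Jensen), the accumulated noise over $D$ steps contributes only a bounded additive term, giving a $D$-step drift
\[
E\big[V(\mathbf{y}_{t+D})\mid \mathbf{y}_{t}=\mathbf{y}\big]\le\rho\,V(\mathbf{y})+C
\]
for a constant $C$, hence $\le\rho'V(\mathbf{y})$ with $\rho<\rho'<1$ once $V(\mathbf{y})\ge C/(\rho'-\rho)$, i.e.\ outside a large compact set. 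Invoking the multi-step (sampled-skeleton) version of the Meyn--Tweedie/Tj\o stheim criterion, as in \cite{DTj90,Twee75,Twee88}, then transfers this $D$-step drift into single-step geometric ergodicity of $\{\mathbf{y}_{t}\}$.

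The delicate point---and where I expect the real work to sit---is justifying that the $D$-step skeleton issued from a large state actually realises the prescribed product $M$ rather than some other ordering of regime maps. Since $a_{0}^{(J)}=0$, the threshold regions are invariant under positive scaling, so the active regime of a large state is governed by its direction; the $D$-cycle hypothesis is precisely the statement that the outer dynamics sweep through the regions attached to $j_{1},\dots,j_{D}$ in order before returning. I would make this rigorous by showing that for $\|\mathbf{y}\|_{*}$ large the innovation terms are negligible relative to the linear part, so the realised regime sequence coincides with $j_{1},\dots,j_{D}$ except on an event of small probability, and then absorb those low-probability excursions into the constant $C$. Controlling this regime-identification step uniformly on the complement of a compact set is the main obstacle; once it is handled, the drift inequality together with the standard ergodicity criterion closes the argument, and the result reduces, in the univariate scalar case $D=1$, to the classical product-of-slopes condition for SETAR models.
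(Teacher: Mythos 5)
The paper does not actually prove this lemma: it simply cites Theorem~4.5 and model~(4.12) of Tj{\o}stheim (1990), with $A_{i_s}=-A_{1}^{(j_s)}$ and cycle length $k=D$. Your drift-criterion skeleton is the right engine behind that cited theorem --- an adapted norm from $\varrho(M)<1$ via Gelfand's formula, a $D$-step Foster--Lyapunov inequality, irreducibility/aperiodicity from an absolutely continuous noise component, and the sampled-chain version of the Meyn--Tweedie criterion --- so as a reconstruction of the underlying machinery it is on target.

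However, the step you yourself flag as ``delicate'' is not a technicality to be absorbed into the constant $C$; it is the entire content of the lemma, and your proposed resolution does not work. You argue that for $\|\mathbf{y}\|$ large the innovations are negligible, so the realised regime itinerary over $D$ steps coincides with $j_{1},\dots,j_{D}$ and the composite map is $M$. But the itinerary of the noiseless dynamics $\mathbf{y}\mapsto -A_{1}^{(J)}\mathbf{y}$ through the partition $\{R_{j}^{i}\}$ is determined by the geometry of the regions and the matrices, and nothing in the lemma's hypotheses forces that itinerary to be the prescribed cycle: from a large state in the region attached to $j_{1}$, the image $-A_{1}^{(j_1)}\mathbf{y}$ may land in any region whatsoever, so the $D$-step composition can be an arbitrary product of the $-A_{1}^{(J)}$, for which $\varrho(M)<1$ along one particular cycle gives no control (sufficient conditions for general itineraries require a joint contraction over all admissible products). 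The statement is only true under the reading that the $D$-cycle hypothesis is a structural assumption on the model --- as in Tj{\o}stheim's model (4.12), where the regime indicator visits $j_{1}\to\cdots\to j_{D}\to j_{1}$ cyclically by construction --- in which case the composite one-cycle map is $M$ by definition and there is nothing to identify. Your proof either needs to take that cyclic visiting as a hypothesis (making the ``regime-identification'' paragraph unnecessary) or it cannot be completed as written; the large-state/small-noise heuristic does not bridge the gap.
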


\begin{proof}
 The result about geometric ergodicity follows from Theorem 4.5 and equation model (4.12) in \cite{DTj90} with $A_{i_s}=-A_{1}^{(j_s)}$ and $k=D.$
\end{proof}

\section{Estimation of model parameters}\label{sec:estimation}

In this section, we assume that assumption \ref{A5} and Lemma \ref{L2} are satisfied. We also assume the model orders $p$, $q$, $d$, and $L$, of model \eqref{fmsetar}-\eqref{fmsetar1}-\eqref{fmsetar2} are known. 
Let model \eqref{fmsetar} be represented as a MSETAR $(L,p_{J};J\in \{1,\cdots, L\}^{D})$ model \eqref{msetar} with exogenous variables or factors as in model \eqref{fmsetar1}-\eqref{fmsetar2}.
We propose to use estimation procedures based on the standard LSE approach and the concept of self-tuning regulators used in the study of adaptive control of stochastic linear systems (see \cite{Kumar86}). \cite{Arnold01} has shown that algorithms for estimation of parameters based on the stochastic gradient principles for linear systems are also suitable for nonlinear systems. 
Alternatively, following \cite{ZCEM00}, one can use local linear fitting plus the projection method to estimate components $g_{1}^{(J)}(\cdot)$ and $g_{2}^{(J)}(\cdot)$ of model equation \eqref{fmsetar4}. The function  
$g_{3}(\cdot)$ can then be estimated directly using a standard approach or by kernel-type estimation (\cite{EMDT95}). 

\subsection{Standard LSE Algorithm for Parameter Estimation }\label{algorithm2}
Consider the MSETARX $(L,p_{J},q;J\in \{1,\cdots, L\}^{D})$ model in equation \eqref{fmsetar2}: 
\begin{equation}\label{fmsetar2**}
 \mathbf{y}_{t}=\sum_{J\in \{1,\cdots, L\}^{D}}y_{t,d}^{(J)}\left( \breve{\Theta}^{(J)}\right)^{T}\breve{\Phi}_{t-1}+ \omega_{t}
\end{equation}  where  $\left( \breve{\Theta}^{(J)}\right)^{T}=-[a_{0}^{(J)}, A_{1}^{(J)},\cdots, A_{p}^{(J)}, \Lambda^{(J)}\Xi_{1},\Lambda^{(J)}\Xi_{2},\cdots,\Lambda^{(J)}\Xi_{q}]$, \\ $\breve{\Phi}_{t}^{T}=[1, \mathbf{y}_{t}^{T}, \mathbf{y}_{t-1}^{T},\cdots, \mathbf{y}_{t-p+1}^{T},\mathbf{f}_{t}^{T}, \mathbf{f}_{t-1}^{T},\cdots, \mathbf{f}_{t-q+1}^{T} ]$
with the autoregressive orders $p_{J}, q$, delay $d$, and thresholds known. Then the LSE is $ \hat{\breve{\Theta}}^{(J)}=\sum_{J\in \{1,\cdots, L\}^{D}}y_{t,d}^{(J)}\left(\breve{\Phi}_{t-1}^{T}\breve{\Phi}_{t-1}\right)^{-1}\breve{\Phi}_{t-1}^{T}\mathbf{y}_{t}$. Following \cite{Kumar86} presentation of the stochastic gradient algorithm for ARX systems, the true parameter $\breve{\Theta}^{(J)}$ can also be estimated by the LSE using the recursion, 
\begin{equation}\label{rlse0} 
\hat{\breve{\Theta}}_{k+1}^{(J)}= \hat{\breve{\Theta}}_{k}^{(J)}+y_{k+1,d}^{(J)}R_{k}^{-1}\breve{\Phi}_{k}\big( \mathbf{y}_{k+1}^{T} - \breve{\Phi}_{k}^{T} \hat{\breve{\Theta}}_{k}^{(J)} \big)
 \end{equation}
\begin{equation}\label{rlse1} 
R_{k}=\sum_{J\in \{1,\cdots, L\}^{D}}\sum_{i=0}^{k}y_{k+1,d}^{(J)} \breve{\Phi}_{i}\breve{\Phi}_{i}^{T}
\end{equation}

\subsection{Algorithm for Adaptive Parameter Estimation}\label{algorithm}
Let $ 0< \alpha \leq 1$, $0 < \upsilon^{(J)} \leq 1$, $p^{*} = \max \{p,d,q\}$, and $\breve{\Theta}$ be the coefficients of the MSETARX $(L,p_{J},q;J\in \{1,\cdots, L\}^{D})$ model in equation \eqref{fmsetar2}. 
\begin{equation*}
\breve{\Theta}_{k}^{(J)}=0; \quad k\leq p^{*}
\end{equation*}
\begin{equation*}
\breve{\Theta}_{k+1}^{(J)}=\breve{\Theta}_{k}^{(J)} + y_{k+1,d}^{(J)}\frac{\alpha \breve{\Phi}_{k}}{s_{k}^{(J)}}\big( \mathbf{y}_{k+1}^{T} - \breve{\Phi}_{k}^{T} \breve{\Theta}_{k}^{(J)} \big) ; \quad k\geq p^{*}
\end{equation*}
\begin{equation}\label{algo1} 
r_{k}^{(J)}=
\begin{cases}
&   1; \quad k < p^{*}\\ 
& r_{k-1}^{(J)}+ \sum_{J\in \{1,\cdots, L\}^{D}}y_{k+1,d}^{(J)} \|\breve{\Phi}_{k}\|^{2}; \quad k\geq p^{*} . \\
\end{cases}\\
 \end{equation}
\begin{equation}\label{algo2} 
s_{k}^{(J)}=
\begin{cases}
&    1; \quad k < p^{*}\\ 
&  s_{k-1}^{(J)}+ \mathbf{y}_{k+1}^{(J)} \Big(\max \{\upsilon^{(J)} r_{k-1}^{(J)},1 \} + \|\mathbf{y}_{k}\|^{2} - s_{k-1}^{(J)} \Big); \quad k\geq p^{*} . \\
\end{cases}\\
 \end{equation}
This algorithm \ref{algorithm} corresponds to the adaptive parameter estimation algorithm proposed by \cite{Arnold01}, with the control sequence being $(s_{k}^{(J)})^{-1}$ instead of $(r_{k}^{(J)})^{-1}$. The simulation results presented by the authors showed that as the control sequence $(r_{k}^{(J)})^{-1}$ becomes large, a further progress towards the true coefficients is prevented or slowed down 
since this control sequence which weight the prediction error decrease too fast. The \textit{relaxed control sequence} $(s_{k}^{(J)})^{-1}$ have similar properties as $(r_{k}^{(J)})^{-1}$ with the convergence spend decreased by the factors $\upsilon^{(J)}$ and in particular, improves the estimation accuracy (\cite{Arnold01}). This algorithm was applied in \cite{Lutz06} for the analysis of biomedical signals.

\subsection{Simulations}\label{sec:simulations}
In this section, we carry out a simulation exercise to study the performance of the parameter estimation algorithm presented in Section \ref{algorithm2}\&\ref{algorithm} on MSETARX models. In this respect, we consider two data generating process (DGP) according to the following: 

\begin{enumerate}
\item Consider a simulated $50,000$ points of a two-dimensional MSETARX process with six-regimes, delay $d=6$, $\Lambda^{(J)}=\mathbf{0}$ for all $J\in \{1,\cdots, L\}^{D}$ in equation \eqref{fmsetar1}, standard normal noise $N(0,1)$ added to all regimes and autoregressive order $p=3$ defined by: 
\begin{equation}
 \mathbf{y}_{t}=
\begin{cases}
&a_{0}^{(1)}+ \mathscr{A}^{(1)}_{1}\mathbf{y}_{t-1}+\mathscr{A}^{(1)}_{2}\mathbf{y}_{t-2}+\mathscr{A}^{(1)}_{3}\mathbf{y}_{t-3}+ \omega_{t}; \quad R_{1}^{i}:= [-\infty,-0.50)\times [-\infty,0)\\ 
  
&a_{0}^{(2)}+ \mathscr{A}^{(2)}_{1}\mathbf{y}_{t-1}+\mathscr{A}^{(2)}_{2}\mathbf{y}_{t-2}+\mathscr{A}^{(2)}_{3}\mathbf{y}_{t-3}+ \omega_{t}; \quad R_{2}^{i}:= [-\infty,-0.50)\times [0, \infty)\\
 
&a_{0}^{(3)}+ \mathscr{A}^{(3)}_{1}\mathbf{y}_{t-1}+\mathscr{A}^{(3)}_{2}\mathbf{y}_{t-2}+\mathscr{A}^{(3)}_{3}\mathbf{y}_{t-3}+ \omega_{t};  \quad R_{3}^{i}:= [-0.50,0.50)\times [-\infty,0)\\   

&a_{0}^{(4)}+ \mathscr{A}^{(4)}_{1}\mathbf{y}_{t-1}+\mathscr{A}^{(4)}_{2}\mathbf{y}_{t-2}+\mathscr{A}^{(4)}_{3}\mathbf{y}_{t-3}+ \omega_{t};  \quad R_{4}^{i}:= [-0.50,0.50)\times (0.00,\infty)\\ 

&a_{0}^{(5)}+ \mathscr{A}^{(5)}_{1}\mathbf{y}_{t-1}+\mathscr{A}^{(5)}_{2}\mathbf{y}_{t-2}+\mathscr{A}^{(5)}_{3}\mathbf{y}_{t-3}+ \omega_{t};  \quad R_{5}^{i}:= [0.50,\infty)\times [-\infty,0.00)\\ 

&a_{0}^{(6)}+ \mathscr{A}^{(6)}_{1}\mathbf{y}_{t-1}+\mathscr{A}^{(6)}_{2}\mathbf{y}_{t-2}+\mathscr{A}^{(6)}_{3}\mathbf{y}_{t-3}+ \omega_{t};  \quad R_{6}^{i}:= [0.50, \infty)\times [0.00,\infty)\\ 
\end{cases}
\end{equation}
  \begin{description}
   \item[Regime 1] $R_{1}^{i}:= [-\infty,-0.50)\times [-\infty,0)$, $\mathscr{A}^{(1)}_{1}=\begin{pmatrix}
  -0.02 & 0.00\\
  0.00 & 0.30\\
  \end{pmatrix}$, $\mathscr{A}^{(1)}_{2}=\begin{pmatrix}
  0.53 & 0.00\\
  0.00 & 0.30\\
  \end{pmatrix}$, $\mathscr{A}^{(1)}_{3}=\begin{pmatrix}
  0.00 & 0.53\\
  0.00 & 0.30\\
  \end{pmatrix}$, $a_{0}^{(1)}=\begin{pmatrix}
  0.74\\
  -0.20\\
  \end{pmatrix}$
   \item[Regime 2] $R_{2}^{i}:= [-\infty,-0.50)\times [0, \infty)$, $\mathscr{A}^{(2)}_{1}=\begin{pmatrix}
  -0.02 & 0.00\\
  0.00 & 0.30\\
  \end{pmatrix}$, $\mathscr{A}^{(2)}_{2}=\begin{pmatrix}
  0.53 & 0.00\\
  0.00 & 0.30\\
  \end{pmatrix}$, $\mathscr{A}^{(2)}_{3}=\begin{pmatrix}
  0.00 & 0.53\\
  0.00 & 0.30\\
  \end{pmatrix}$, $a_{0}^{(2)}=\begin{pmatrix}
  -0.75\\
  -0.20\\
  \end{pmatrix}$ 
   \item[Regime 3] $R_{3}^{i}:= [-0.50,0.50)\times [-\infty,0)$, $\mathscr{A}^{(3)}_{1}=\begin{pmatrix}
  -0.94 & 0.00\\
  0.00 & 0.30\\
  \end{pmatrix}$, $\mathscr{A}^{(3)}_{2}=\begin{pmatrix}
  0.85 & 0.00\\
  0.00 & 0.30\\
  \end{pmatrix}$, $\mathscr{A}^{(3)}_{3}=\begin{pmatrix}
  0.00 & 0.85\\
  0.00 & 0.30\\
  \end{pmatrix}$, $a_{0}^{(3)}=\begin{pmatrix}
  1.15\\
  -0.20\\
  \end{pmatrix}$ 
   \item[Regime 4] $R_{4}^{i}:= [-0.50,0.50)\times (0.00,\infty)$, $\mathscr{A}^{(4)}_{1}=\begin{pmatrix}
  -0.94 & 0.00\\
  0.00 & 0.30\\
  \end{pmatrix}$, $\mathscr{A}^{(4)}_{2}=\begin{pmatrix}
  0.85 & 0.00\\
  0.00 & 0.30\\
  \end{pmatrix}$, $\mathscr{A}^{(4)}_{3}=\begin{pmatrix}
  0.00 & 0.85\\
  0.00 & 0.30\\
  \end{pmatrix}$, $a_{0}^{(4)}=\begin{pmatrix}
  0.74\\
  0.20\\
  \end{pmatrix}$ 
   \item[Regime 5] $R_{5}^{i}:= [0.50,\infty)\times [-\infty,0.00)$, $\mathscr{A}^{(5)}_{1}=\begin{pmatrix}
  -1.10 & 0.00\\
  0.00 & 0.30\\
  \end{pmatrix}$, $\mathscr{A}^{(5)}_{2}=\begin{pmatrix}
  -0.30 & 0.00\\
  0.00 & 0.30\\
  \end{pmatrix}$, $\mathscr{A}^{(5)}_{3}=\begin{pmatrix}
  0.00 & -0.30\\
  0.00 & 0.30\\
  \end{pmatrix}$, $a_{0}^{(5)}=\begin{pmatrix}
  -0.75\\
  0.20\\
  \end{pmatrix}$
   \item[Regime 6] $R_{6}^{i}:= [0.50, \infty)\times [0.00,\infty)$, $\mathscr{A}^{(6)}_{1}=\begin{pmatrix}
  -1.10 & 0.00\\
  0.00 & 0.30\\
  \end{pmatrix}$, $\mathscr{A}^{(6)}_{2}=\begin{pmatrix}
  0.30 & 0.00\\
  0.00 & 0.30\\
  \end{pmatrix}$, $\mathscr{A}^{(6)}_{3}=\begin{pmatrix}
  0.00 & 0.30\\
  0.00 & 0.30\\
  \end{pmatrix}$, $a_{0}^{(6)}=\begin{pmatrix}
  1.15\\
  0.20\\
  \end{pmatrix}$, 
  \end{description} and a signal section is shown in Figure \ref{fig:2dimMSETARprocess}. In \ref{estimResults}, autoregressive coefficient estimates obtained via the LSE algorithm is provided.
  
 \item Consider a three--regime ($L=3$) bivariate ($D=2$) MSETARX $(L,p_{J},q;J\in \{1,\cdots, L\}^{D})$ model with a bivariate exogenous input ($\kappa=2$), model orders be unit ($p=\max \{p_{J}| J\in \{1,\cdots, L\}^{D} \}=1$, $q=1$) and delay $d=1$: 
\begin{equation}
 \mathbf{y}_{t}=
\begin{cases}
& \mathscr{A}^{(1)}_{1}\mathbf{y}_{t-1}+ \Lambda^{(1)}\Xi_{1} \mathbf{f}_{t-1}+ \omega_{t}; \quad (y_{t-1})_{2}\leq -0.5 \quad \\ 
  
& \mathscr{A}^{(2)}_{1}\mathbf{y}_{t-1}+ \Lambda^{(2)}\Xi_{1} \mathbf{f}_{t-1}+ \omega_{t}; \quad (y_{t-1})_{2}\in (-0.5,0.5]\\
 
& \mathscr{A}^{(3)}_{1}\mathbf{y}_{t-1}+ \Lambda^{(3)}\Xi_{1} \mathbf{f}_{t-1}+ \omega_{t};  \quad (y_{t-1})_{2}\geq 0.5\\   
\end{cases}
\end{equation} where $\mathscr{A}^{(1)}_{1}=\begin{pmatrix}
  -0.3 & 0.6\\
  -0.7 & 0.4\\
  \end{pmatrix}$, $\mathscr{A}^{(2)}_{1}=\begin{pmatrix}
  1.5 & -1\\
  0.2 & 0.3\\
  \end{pmatrix}$, $\mathscr{A}^{(3)}_{1}=\begin{pmatrix}
  0.3 & -0.1\\
  0.2 & 0.6\\
  \end{pmatrix}$, $\Xi_{1}=\begin{pmatrix}
  0.5 & 0\\
  0.3 & 0\\
  \end{pmatrix}$, $\Lambda^{(1)}=\begin{pmatrix}
  0.2 & 0\\
  0 & 0\\
  \end{pmatrix}$, $\Lambda^{(2)}=\begin{pmatrix}
  0.3 & 0\\
  0 & 0.2\\
  \end{pmatrix}$, $\Lambda^{(3)}=\begin{pmatrix}
  0.8 & 0\\
  0 & 0\\
  \end{pmatrix}$. It is worth noting that the multivariate process $\mathbf{y}_{t}$ is unstable in the inner regime and only the second subprocess 
  determines the current regime.   
\end{enumerate}

\begin{figure}[!ht]
  \centering
\includegraphics[width=1.2\textwidth]{./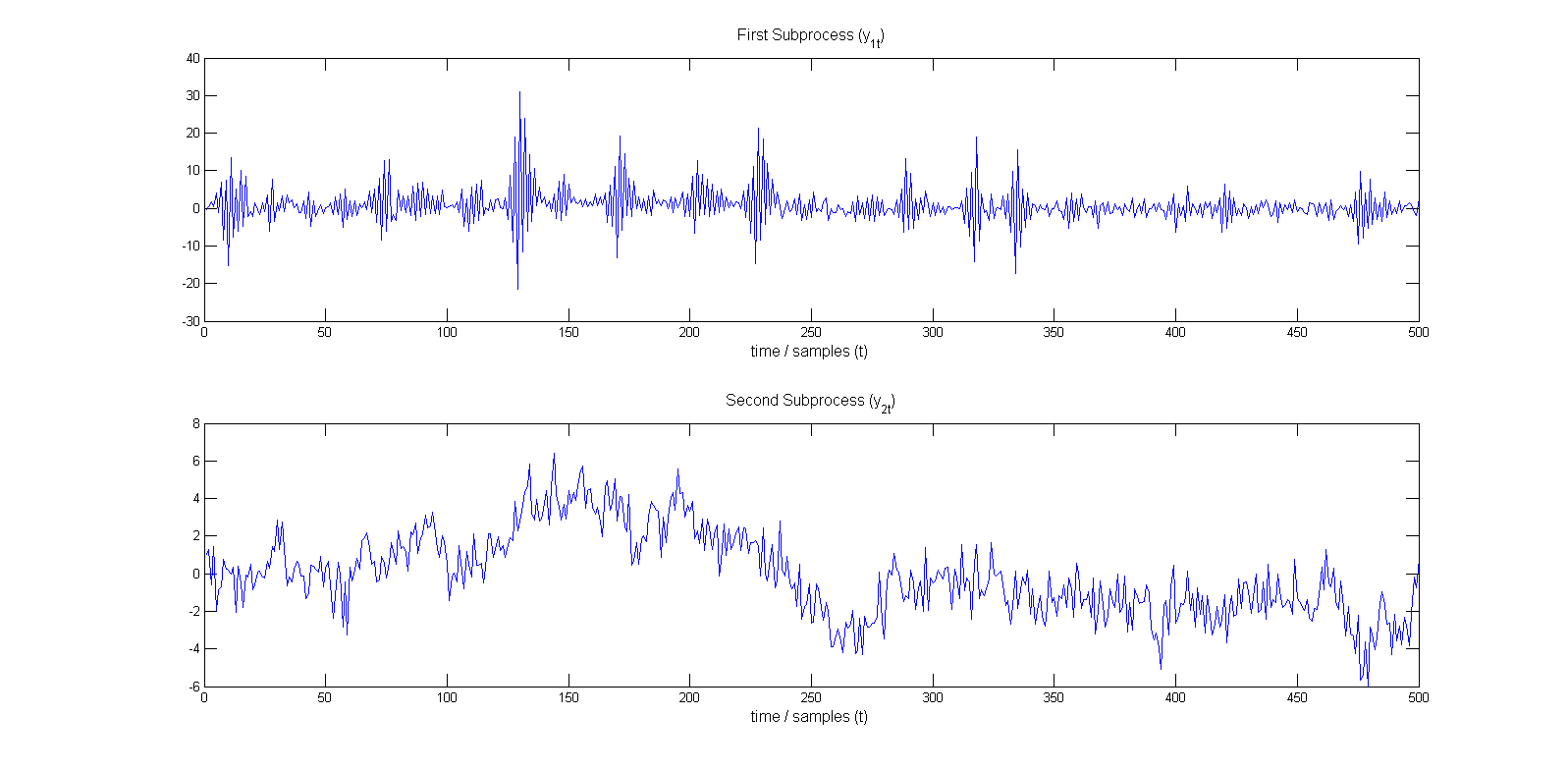}
 \caption{Two-dimensional MSETARX process with six-regimes, delay $d=6$, autoregressive order $p=3$ and $\Lambda^{(J)}=\mathbf{0}$ for all $J\in \{1,\cdots, L\}^{D}$ in equation \eqref{fmsetar1}. This is a signal section of $500$ time samples of the multivariate process.}
  \label{fig:2dimMSETARprocess}
\end{figure}

\section{Concluding remarks}\label{sec:conclusion}
The recent financial crisis of 2007-2009 has lead to a need for regulators and policy makers to understand and track systemic linkages. As the events following the turmoil in financial markets unfolded, it became evident that modern financial systems exhibit a high degree of interdependence and nonlinearity making it difficult in predicting the consequences of such an intertwined system. In this study, we define a nonlinear multivariate SETARX model useful in modeling economic relationships and to capture non-linear dynamics such as regime switching and asymmetric responses to shocks. We then present an estimation procedure for the parameters. 

In general, testing linearity is the first step of a proper modelling strategy of nonlinear models as it is possible that a linear model could adequately capture the relationship considered. Nonlinear models are usually not identified when the underlying process is linear (\cite{Terasvirta94, Terasvirta13, Hansen99, Tsy98, pma5}). The proposed test statistic for detecting threshold nonlinearity in vectors time series and the procedure for building multivariate threshold models discussed in \cite{Tsy98} could be performed on each subprocess in the MSETARX model setting. 
In this case, \cite{Arnold01} suggests a reasonable choice of the delay to be $d^{*}=argmax \{ \sum_{i}^{D} \mathscr{C}^{(i)}(d)~ |~ d \in \{1,\cdots,d_{max}\} \}$ where $\mathscr{C}^{(i)}(d)$ is the value of the test statistic (\cite{Tsy98}) for each subprocess $i$. One could apply the Wald test procedure used in \cite{Balke00}, which is a generalisation of \cite{Hansen96} approach, to test linearity. Another possibility of testing linear VAR model against a MSETARX model would be to generalise the approach the approach of \cite{Strikholm06} to multivariate models.
 
After the parameter estimation of model \eqref{fmsetar1}, it is necessary to evaluate the model by appropriate misspecification tests before 
putting it into practice. The general purpose is to find out if the assumptions made in the estimation step appear satisfied (\cite{Tsy98, Strikholm06, Hansen97, Hansen00}). For more details about modelling strategies and issues of vector threshold autoregressive models, we refer interested readers 
to \cite{Tsy98, Hansen11}. This model could be very useful in studying huge data sets such as the analysis of high-frequency financial data. 
 
Many problems remain open for the multivariate SETARX models. For example, establishing a testing procedure in determining the number of regimes and the specification of the threshold space will required a careful investigation.

\section*{Acknowledgement}
This research is supported by the Erasmus Mundus Fellowship. We are grateful to Lutz Leistritz for his support. 

\appendix

\section{Estimation Results}\label{estimResults}
We provide below the estimation of parameters obtained via LSE algorithm in Section \ref{algorithm2} on the first simulated process in Section \ref{sec:simulations}. The regime time corresponds to the number of temporal samples, where the multivariate process stayed in each regime. 
 \begin{description}
   \item[Regime 1] $R_{1}^{i}:= [-\infty,-0.50)\times [-\infty,0)$, $\hat{\mathscr{A}}^{(1)}_{1}=\begin{pmatrix}
  -0.0278 & -0.0169\\
  0.0027 & 0.2812\\
  \end{pmatrix}$, $\hat{\mathscr{A}}^{(1)}_{2}=\begin{pmatrix}
  0.5275 & 0.0025\\
  0.0013 & 0.3073\\
  \end{pmatrix}$, $\hat{\mathscr{A}}^{(1)}_{3}=\begin{pmatrix}
  0.0069 & 0.5419\\
  -0.0005 & 0.3046\\
  \end{pmatrix}$, $\hat{a}_{0}^{(1)}=\begin{pmatrix}
  0.7399\\
  -0.2012\\
  \end{pmatrix}$, (regime time: 10927).
   \item[Regime 2] $R_{2}^{i}:= [-\infty,-0.50)\times [0, \infty)$, $\hat{\mathscr{A}}^{(2)}_{1}=\begin{pmatrix}
  -0.0156 & 0.0009\\
  0.0033 & 0.2935\\
  \end{pmatrix}$, $\hat{\mathscr{A}}^{(2)}_{2}=\begin{pmatrix}
  0.5317 & -0.0051\\
  0.0043 & 0.3102\\
  \end{pmatrix}$, $\hat{\mathscr{A}}^{(2)}_{3}=\begin{pmatrix}
  -0.0012 & 0.5173\\
  -0.0021 & 0.2904\\
  \end{pmatrix}$, $\hat{a}_{0}^{(2)}=\begin{pmatrix}
  -0.7404\\
  -0.1951\\
  \end{pmatrix}$, (regime time: 8770). 
   \item[Regime 3] $R_{3}^{i}:= [-0.50,0.50)\times [-\infty,0)$, $\hat{\mathscr{A}}^{(3)}_{1}=\begin{pmatrix}
  -0.9417 & -0.0008\\
  0.0143 & 0.2859\\
  \end{pmatrix}$, $\hat{\mathscr{A}}^{(3)}_{2}=\begin{pmatrix}
  0.8602 & 0.0040\\
  0.0211 & 0.3003\\
  \end{pmatrix}$, $\hat{\mathscr{A}}^{(3)}_{3}=\begin{pmatrix}
  0.0067 & 0.8483\\
  -0.0004 & 0.3014\\
  \end{pmatrix}$, $\hat{a}_{0}^{(3)}=\begin{pmatrix}
  1.1337\\
  -0.2408\\
  \end{pmatrix}$, (regime time: 3932). 
   \item[Regime 4] $R_{4}^{i}:= [-0.50,0.50)\times (0.00,\infty)$, $\hat{\mathscr{A}}^{(4)}_{1}=\begin{pmatrix}
  -0.9302 & -0.0210\\
  -0.0023 & 0.3142\\
  \end{pmatrix}$, $\hat{\mathscr{A}}^{(4)}_{2}=\begin{pmatrix}
  0.8631 & 0.0033\\
  -0.0135 & 0.3116\\
  \end{pmatrix}$, $\hat{\mathscr{A}}^{(4)}_{3}=\begin{pmatrix}
  0.0066 & 0.8497\\
  0.0079 & 0.2789\\
  \end{pmatrix}$, $\hat{a}_{0}^{(4)}=\begin{pmatrix}
  0.7101\\
  0.1960\\
  \end{pmatrix}$, (regime time: 3235). 
   \item[Regime 5] $R_{5}^{i}:= [0.50,\infty)\times [-\infty,0.00)$, $\hat{\mathscr{A}}^{(5)}_{1}=\begin{pmatrix}
  -1.1008 & -0.0056\\
  0.0032 & 0.2923\\
  \end{pmatrix}$, $\hat{\mathscr{A}}^{(5)}_{2}=\begin{pmatrix}
  -0.2918 & -0.0012\\
  0.0019 & 0.3106\\
  \end{pmatrix}$, $\hat{\mathscr{A}}^{(5)}_{3}=\begin{pmatrix}
  0.0066 & -0.2971\\
  0.0029 & 0.2958\\
  \end{pmatrix}$, $\hat{a}_{0}^{(5)}=\begin{pmatrix}
  -0.7595\\
  0.1927\\
  \end{pmatrix}$, (regime time: 9697).
   \item[Regime 6] $R_{6}^{i}:= [0.50, \infty)\times [0.00,\infty)$, $\hat{\mathscr{A}}^{(6)}_{1}=\begin{pmatrix}
  -1.0995 & 0.0087\\
  0.0013 & 0.3147\\
  \end{pmatrix}$, $\hat{\mathscr{A}}^{(6)}_{2}=\begin{pmatrix}
  0.3011 & -0.0150\\
  0.0029 & 0.2904\\
  \end{pmatrix}$, $\hat{\mathscr{A}}^{(6)}_{3}=\begin{pmatrix}
  -0.0004 & 0.3033\\
  0.0026 & 0.2996\\
  \end{pmatrix}$, $\hat{a}_{0}^{(6)}=\begin{pmatrix}
  1.1508\\
  0.1942\\
  \end{pmatrix}$, (regime time: 13433).
  \end{description}

%
%
%
%
%


\bibliographystyle{elsarticle-harv}
\bibliography{reference1.bib}







\end{document}